\newcommand{\footremember}[2]{%
    \footnote{#2}
    \newcounter{#1}
    \setcounter{#1}{\value{footnote}}%
}
\title{Optimal designs for third-order interactions in paired comparison experiments}
\author{%
  Eric Nyarko \footremember{alley}{Corresponding author. E-mail: eric.nyarko@ovgu.de; nyarkoeric5@gmail.com}
  \date{\textit{\footnotesize Institute for Mathematical Stochastics, University of Magdeburg, PF 4120, D-39016 Magdeburg, Germany
  }}}
\date{}
\begin{document}
\maketitle
\begin{abstract}\noindent
It is shown how by not losing information on higher order interactions, optimal paired comparison designs involving alternatives of either full or partial profiles to reduce information overload as frequently encountered in applications can be constructed which enable identification of main effects up to third-order interactions when all attributes have general common number of levels.
\end{abstract}
{\bf Keywords:}~Comparison depth; Full profile; Interactions; Optimal design; Paired comparisons; Partial profile; Profile strength\\\vspace{0.5mm}

%\newpage
\noindent
{\it AMS 2000 Subject Classifications}:~Primary:~62K05;~Secondary:~62J15,~62K15
\section{Introduction}
In a paired comparison experiment usually two competing objects (alternatives) $\textbf{i}$ and $\textbf{j}$ say, are presented to a respondent who must trade-off one alternative against the other and state his/her preferences. Data arising from a paired comparison task can either be qualitative \citep[e.g., see][]{bradley1952rank} or quantitative \citep[e.g., see][]{scheffe1952analysis} depending on the response method used. This paper adopts the corresponding method by which the data is generated according to a quantitative response and, when additionally the degree of preference for the alternatives $\textbf{i}$ and $\textbf{j}$ is scored or indicated on a rating scale. For example, scoring alternative $\textbf{i}$ when compared to alternative $\textbf{j}$ on a seven-point scale ($3, 2, 1, 0, -1, -2, -3$) means: $3$; strong preference for $\textbf{i}$ to $\textbf{j}$, $2$; moderate preference for $\textbf{i}$ to $\textbf{j}$, $1$; slight preference for $\textbf{i}$ to $\textbf{j}$, $0$; no preference, $-1$; slight preference for $\textbf{j}$ to $\textbf{i}$, $-2$; moderate preference for $\textbf{j}$ to $\textbf{i}$ and $-3$; strong preference for $\textbf{j}$ to $\textbf{i}$. \par

The method of paired comparisons can be employed in a hypothetical (occasionally real) situation. It has received considerable attention in many fields of applications like psychology, health economics, environmental valuation, transportation economics and marketing to study people's preferences for goods or services. However, in application a situation may arise where information on main effects and interactions is worthwhile. This paper is motivated by the situation where the designs enable identification of main effects and up to third-order interactions, which may be not of primary application interest but theoretically worthwhile \citep[e.g., see][]{quenouille1971paired,el1984optimal,lewis1985paired,elrod1992empirical,collins2009design}.
 \par  
In order to reduce information overload as frequently encountered in applications when a respondent has to compare alternatives described by a large number of attributes, comparisons are often restricted to only a subset of the attributes with potentially different levels and the remaining attributes are usually set to zero. The number of attributes that are shown in this restricted settings is called the profile strength, and the set of alternatives described by this profile strength is known as partial profiles \citep{chrzan2010using,kessels2011bayesian}.  \par
The aim of this paper is to construct designs that estimate main effects and up to third-order interactions for the situation of both full and partial profiles when all attributes have the same number of levels. Results for main effects plus first- plus second-order interactions are provided \citep{grasshoff2004optimal,grasshoff2003optimal,schwabe2003optimal,nyarko2019optimal,eric2019optimal4444}.\par
We mention that the designs considered in this paper possess large number of comparisons but can serve as a benchmark to judge the efficiency of competing designs as well as a starting point to construct designs which share the property of optimality and can be realized with a reasonable number of comparisons.\par

This paper is organized as follows. A general model is introduced in Section $2$ for linear paired comparisons. Section $3$ provides the third-order interactions model for full and partial profiles. The optimal design results obtained are discussed in Section $4$. All proofs are deferred to the Appendix.

\section{General setting} 
Suppose in an experimental situation there are $K$ attributes such that the $k$th attribute has $i_k$ levels $(i_k=1,\dots,v)$. Then in this setting any observation (utility) $\tilde{Y}_{na}(\textbf{i})$ of a single alternative $\textbf{i}= (i_{1},\dots,i_{K})$ where $i_k$ is the component of the $k$th attribute, $k=1,\dots,K$ within a pair of alternatives ($a=1,2$) can be formalized by a general linear model\par

\begin{equation}\label{eq:1}
\begin{split}
\tilde{Y}_{na}(\textbf{i})&= \mu_{n}+\textbf{f}(\textbf{i})^{\top} \boldsymbol{\beta} + \tilde{\varepsilon}_{na},
\end{split}
\end{equation}
where $\mu_n$ is the block effect, the index $n$ denotes the $n$th presentation, $n=1,\dots,N$, and the alternative $\textbf{i}$ is selected from a set $\mathcal{I}$ of possible realizations of the alternatives. Here $\textbf{f}$ is the regression function, the mean response $\mu_n+\textbf{f}(\textbf{i})^{\top}\boldsymbol{\beta}$ is assumed to be known, $\boldsymbol{\beta}$ is the vector of unknown parameters of interest, and the random error $\tilde{\varepsilon}_{na}$ is assumed to be uncorrelated with constant variance and zero mean.  \par

Without loss of generality, we consider paired comparison experiments where the utilities for the alternatives are not directly observed. Only observations $Y_n(\textbf{i},\textbf{j})=\tilde{Y}_{n1}(\textbf{i})-\tilde{Y}_{n2}(\textbf{j})$ of the amount of preference are available for comparing pairs $(\textbf{i},\textbf{j})$ of alternatives $\textbf{i}$ and $\textbf{j}$ which are chosen from the design region $\mathcal{X}=\mathcal{I}\times \mathcal{I}$. In this case the utilities for the alternatives are properly described by the linear paired comparison model 
\begin{equation} \label{eq:2}
\begin{split}
Y_n(\textbf{i, j})=(\textbf{f}(\textbf{i})-\textbf{f}(\textbf{j}))^{\top}\boldsymbol\beta+\varepsilon_{n}, \\
\end{split}
\end{equation}
where $\textbf{f}(\textbf{i})-\textbf{f}(\textbf{j})$ is the derived regression function and the random errors $\varepsilon_{n}(\textbf{i}, \textbf{j})=\tilde{\varepsilon}_{n1}(\textbf{i})-\tilde{\varepsilon}_{n2}(\textbf{j})$ associated with the different pairs $(\textbf{i}, \textbf{j})$ are assumed to be uncorrelated with constant variance and zero mean. \par

The performance of the statistical analysis depends on the  pairs $(\textbf{i}, \textbf{j})$ that are chosen from the design region $\mathcal{X}=\mathcal{I}\times \mathcal{I}$. The choice of such pairs $(\textbf{i}_1,\textbf{j}_1),\dots,(\textbf{i}_N,\textbf{j}_N)\in\mathcal{X}$ makes up the design for the study. 
The goodness of a design is measured by its information matrix
\begin{equation}\label{eq:3}
\textbf{M}((\textbf{i}_1, \textbf{j}_1),\dots,(\textbf{i}_N, \textbf{j}_N))=\sum_{n=1}^{N}\textbf{M}((\textbf{i}_n, \textbf{j}_n)),
\end{equation}
where $\textbf{M}((\textbf{i},\textbf{j}))=(\textbf{f}(\textbf{i})-\textbf{f}(\textbf{j}))(\textbf{f}(\textbf{i})-\textbf{f}(\textbf{j}))^{\top}$ is the elemental information of a single pair $(\textbf{i},\textbf{j})$.\par

In the optimal design literature two types of designs are studied: approximate or continuous designs and exact designs. This paper focuses on the former designs \citep[e.g., see][]{kiefer1959optimum} which are essentially probability measures defined on a design region. Every approximate design $\xi$ which assigns only rational weights $\xi(\textbf{i},\textbf{j})$ to all pairs $(\textbf{i},\textbf{j})\in\mathcal{X}$ can be realized as an exact design $\xi_N$ of size $N$ consisting of the pairs $(\textbf{i}_1,\textbf{j}_1),\dots,(\textbf{i}_N,\textbf{j}_N)$. \par

The information matrix of an approximate design $\xi$ in the linear paired comparison model is given by
\begin{equation}\label{eq:4}
\textbf{M}(\xi)=\sum_{n=1}^{N}\xi(\textbf{i}_n,\textbf{j}_n)(\textbf{f}(\textbf{i}_n)-\textbf{f}(\textbf{j}_n))(\textbf{f}(\textbf{i}_n)-\textbf{f}(\textbf{j}_n))^{\top},
\end{equation} 
which is proportional to the inverse of the covariance matrix for the best linear unbiased estimator of the parameter vector $\boldsymbol{\beta}$. The normalized information matrix $\textbf{M}(\xi_N)$ for an exact design $\xi_N$ coincides with the information matrix $\textbf{M}(\xi)$ of the corresponding approximate design $\xi$.\par
The $D$-optimality criterion is considered, which can be regarded as a scalar measure of design quality. An approximate design $\xi^{\ast}$ is $D$-optimal if it maximizes the determinant of the information matrix, that is, if $\mathrm{det} \textbf{M}(\xi^{\ast})$ $\geq$ $\mathrm{det} \textbf{M}(\xi)$ for every approximate design $\xi$ on $\mathcal{X}$.  \par

It is worthwhile mentioning that under the indifference assumption of equal choice probability the designs considered in this paper carry over to the \citet{bradley1952rank} type choice experiments \citep[e.g., see][]{grossmann2015handbook}.

\section{Third-order interactions model}
In this section, we give a characterization of the third-order interactions model under consideration. Corresponding results for the particular case of binary attributes can be found  \citep{eric2019optimal4444}. The instances when information on main effects, first- and second-order interactions is worthwhile can be found \citep{van1987optimal, grasshoff2003optimal,nyarko2019optimal}. \par

Analogously to \citet{nyarko2019optimal}, we first proceed with the situation of full profiles when all attributes, say, $K$ for which the $k$th attribute has $i_k$ levels $(i_k=1,\dots,v)$ are presented. For the present setting of paired comparisons, we denote the first alternative by $\textbf{i}=(i_1,\dots,i_K)$ and the second alternative by $\textbf{j}=(j_1,\dots,j_K)$. For each attribute $k$ the corresponding regression functions $\mathbf{f}_k=\mathbf{f}$ coincide with the one-way layout \citep[see][Section~3]{grasshoff2004optimal}.\par

In the presence of up to third-order interactions direct responses $\tilde{Y}_{na}$ at alternative $\textbf{i}=(i_1,\ldots,i_K)$ can be modeled as
\begin{align}\label{eq:full_direct}
\tilde{Y}_{na}(\textbf{i})&=\mu_n+\sum_{k=1}^{K}\textbf{f}(i_k)^\top \boldsymbol{\beta}_k+\sum_{k<\ell}(\textbf{f}(i_k)\otimes \textbf{f}(i_\ell))^\top\boldsymbol{\beta}_{k\ell} \nonumber\\
&\qquad+\sum_{k<\ell<m}(\textbf{f}(i_k)\otimes \textbf{f}(i_\ell)\otimes \textbf{f}(i_m))^\top\boldsymbol{\beta}_{k\ell m}
\nonumber\\
&\qquad+\sum_{k<\ell<m<r}(\textbf{f}(i_k)\otimes \textbf{f}(i_\ell)\otimes \textbf{f}(i_m)\otimes \textbf{f}(i_r))^\top\boldsymbol{\beta}_{k\ell mr}+\tilde{\varepsilon}_{na},
\end{align}
for full profiles, where $\otimes$ denotes the Kronecker product of vectors or matrices, $\boldsymbol{\beta}_k=(\beta_{i_k}^{(k)})_{i_k=1,\dots,v-1}$ denotes the main effect of the $k$th attribute, $\boldsymbol{\beta}_{k\ell}=(\beta_{i_ki_{\ell}}^{(k\ell)})_{i_k=1,\dots,v-1,~i_{\ell}=1,\dots,v-1}$ is the first-order interaction of the $k$th and $\ell$th attribute, $\boldsymbol{\beta}_{k\ell m}=(\beta_{i_ki_{\ell}i_m}^{(k\ell m)})_{i_k=1,\dots,v-1,~i_{\ell}=1,\dots,v-1,~i_{m}=1,\dots,v-1}$ is the second-order interaction of the $k$th, $\ell$th and $m$th attribute, and $\boldsymbol{\beta}_{k\ell mr}=(\beta_{i_ki_{\ell}i_mi_r}^{(k\ell mr)})_{i_k=1,\dots,v-1,}$ $_{i_{\ell}=1,\dots,v-1,~i_{m}=1,\dots,v-1,~i_{r}=1,\dots,v-1}$ is the third-order interaction of the $k$th, $\ell$th, $m$th and $r$th attribute.
The vectors $(\boldsymbol{\beta}_k)_{1\leq k\leq K}$ of main effects have parameter $p_1=K(v-1)$, $(\boldsymbol{\beta}_{k\ell})_{1\leq k<\ell\leq K}$ of first-order interactions have parameter $p_2=(1/2)K(K-1)(v-1)^{2}$, $(\boldsymbol{\beta}_{k\ell m})_{1\leq k<\ell<m\leq K}$ of second-order interactions have parameter $p_3=(1/6)K(K-1)(K-2)(v-1)^{3}$, and $(\boldsymbol{\beta}_{k\ell mr})_{1\leq k<\ell<m<r\leq K}$ of third-order interactions have parameter $p_4=(1/24)K(K-1)(K-2)(K-3)(v-1)^{4}$.
Hence the complete parameter vector 
\begin{align}\label{eqtg:4.}
\boldsymbol{\beta}=((\boldsymbol{\beta}_k)_{k=1,\dots,K}^\top,(\boldsymbol{\beta}_{k\ell})_{k<\ell}^\top,(\boldsymbol{\beta}_{k\ell m})^\top_{k<\ell<m},(\boldsymbol{\beta}_{k\ell mr})^\top_{k<\ell<m<r})^\top,
\end{align}
 has dimension $p=p_1+p_2+p_3+p_4$. 
The corresponding $p$-dimensional vector $\textbf{f}$ of regression functions is given by
\begin{align}\label{eq:10}
\textbf{f}(\textbf{i})&=(\textbf{f}(i_1)^\top,\dots,\textbf{f}(i_K)^\top,\textbf{f}(i_1)^\top\otimes\textbf{f}(i_2)^\top,\dots,\textbf{f}(i_{K-1})\otimes\textbf{f}(i_K)^\top,  \nonumber\\
&\ ~~ \textbf{f}(i_1)^\top\otimes\textbf{f}(i_2)^\top\otimes\textbf{f}(i_3)^\top,\dots,\textbf{f}(i_{K-2})^\top\otimes\textbf{f}(i_{K-1})^\top\otimes\textbf{f} (i_K)^\top,\nonumber\\
&\ ~~\textbf{f}(i_1)\otimes\textbf{f}(i_2)\otimes\textbf{f}(i_3)\otimes\textbf{f}(i_4),\dots,\textbf{f}(i_{K-3})\otimes\textbf{f}(i_{K-2})\otimes\textbf{f}(i_{K-1})\otimes\textbf{f}(i_K)^\top),^\top 
\end{align}
where the first $K$ components $\textbf{f}(i_{1}),\dots,\textbf{f}(i_{K})$ are associated with the main effects, the second components $\textbf{f}(i_1)\otimes\textbf{f}(i_2),\dots,\textbf{f}(i_{K-1})$ $\otimes\textbf{f}(i_K)$ are associated with the first-order interactions, the third components $\textbf{f}(i_1)\otimes\textbf{f}(i_2)\otimes\textbf{f}(i_3),\dots,\textbf{f}(i_{K-2})\otimes\textbf{f}(i_{K-1})\otimes\textbf{f}(i_K)$ are associated with the second-order interactions and the remaining components $\textbf{f}(i_1)\otimes\textbf{f}(i_2)\otimes\textbf{f}(i_3)\otimes\textbf{f}(i_4),\dots,\textbf{f}(i_{K-3})\otimes\textbf{f}(i_{K-2})\otimes\textbf{f}(i_{K-1})\otimes\textbf{f}(i_K)$ of $\textbf{f}(\textbf{i})$ are associated with the third-order interactions.\par
As was already pointed out, in order to reduce information overload as frequently encounted in applications when one has to compare alternatives described by a large number of attributes, comparisons are often restricted to only a subset of the attributes with potentially different levels and the remaining attributes are usually set to zero. The number of attributes in this restricted settings is referred to as the profile strength $S$ and alternatives described by $S$ are referred to as partial profiles. In this setting a direct observation can be described by \eqref{eq:full_direct} even for a partial profile $\textbf{i}$ from the set 
\begin{equation}\label{eq:12}
\begin{split}
\mathcal{I}^{(S)}=&\{\textbf{i};\ i_{k}\in\{1,\dots,v\} \textrm{ for at most $S$ indices $k$} \}.
\end{split}
\end{equation}
Here $\mathcal{I}^{(K)}=\mathcal{I}^{(S)}$ in the case of full profiles ($S=K$).\par

Moreover, for observations in linear paired comparisons the resulting model is given by
\begin{align}\label{eq:11}
Y_{n}(\textbf{i},\textbf{j})&=\sum_{k=1}^{K}(\textbf{f}(i_k)-\textbf{f}(j_k))^\top \boldsymbol{\beta}_k+\sum_{k<\ell}((\textbf{f}(i_k)\otimes \textbf{f}(i_\ell))-(\textbf{f}(j_k)\otimes \textbf{f}(j_\ell)))^\top\boldsymbol{\beta}_{k\ell} \nonumber\\
&\qquad+\sum_{k<\ell<m}((\textbf{f}(i_k)\otimes \textbf{f}(i_\ell)\otimes \textbf{f}(i_m))-(\textbf{f}(j_k)\otimes \textbf{f}(j_\ell)\otimes \textbf{f}(j_m)))^\top\boldsymbol{\beta}_{k\ell m}  \nonumber\\
&\qquad+\sum_{k<\ell<m<r}((\textbf{f}(i_k)\otimes \textbf{f}(i_\ell)\otimes \textbf{f}(i_m)\otimes \textbf{f}(i_r))  \nonumber\\
&\qquad\qquad\qquad\qquad-(\textbf{f}(j_k)\otimes \textbf{f}(j_\ell)\otimes \textbf{f}(j_m)\otimes \textbf{f}(j_r)))^\top\boldsymbol{\beta}_{k\ell m r}+\varepsilon_{n}.
\end{align}
Accordingly, the design region can be specified as

\begin{equation}\label{eq:12}
\begin{split}
\mathcal{X}^{(S)}&=\{(\textbf{i},\textbf{j});~i_{k}, j_{k}\in\{1,\dots,v\} \ \textrm{for at most $S$ indices $k$}\}.
\end{split}
\end{equation}
Also, here the design region $\mathcal{X}^{(K)}=\mathcal{I}^{(K)}\times\mathcal{I}^{(K)}$ in the case of full profiles ($S=K$), where all pairs of alternatives are described by all attributes.

\section{Optimal designs}
In the present setting we construct designs that estimate main effects and higher-order interactions, which is motivated by the work of \citet{quenouille1971paired}, \citet{lewis1985paired} and \citet{el1984optimal}, among others. In particular, optimal designs will be investigated under the third-order interactions paired comparison model \eqref{eq:11} with corresponding regression functions $\textbf{f}(\textbf{i})$ given by \eqref{eq:10}. Here we point out that a minimum of four attributes are required to enable identifiability of the interactions. \par
Following \citet{grasshoff2003optimal}, we define $d$ as the comparison depth which describes the number of attributes in which the two alternatives presented differ satisfying $d=0,1,\dots, S$. For this situation the design region $\mathcal{X}^{(S)}$ in \eqref{eq:12} can be partitioned into disjoint sets 
\begin{equation}\label{eq:13}
\mathcal{X}^{(S)}_{d}=\{(\textbf{i},\textbf{j})\in\mathcal{X}^{(S)};\ i_{k}\neq j_{k} \textrm{ for exactly $d$ components}\}.
\end{equation}
These sets constitute the orbits with respect to permutations of both the levels $i_k,j_k=1,\dots,v$ within the attributes as well as among attributes $k=1,\dots,K$, themselves. Here the problem of finding $D$-optimal designs is restricted to the class of invariant designs \citep[see][Section 3.2]{1996optimum} which are uniform on the orbits of fixed comparison depth $d\leq S$. For the corresponding set we let $N_{d}={K \choose S}{S \choose d}v^S(v-1)^d$ be the number of different pairs which vary in exactly $d$ attributes and denote $\bar{\xi}_{d}$ as the uniform approximate design which assigns equal weights $\bar{\xi}_{d}(\textbf{i},\textbf{j})=1/N_{d}$ to each pair $(\textbf{i},\textbf{j})$ in $\mathcal{X}^{(S)}_{d}$ and weight zero to all remaining pairs in $\mathcal{X}^{(S)}$. In the following we present the information matrix for the corresponding invariant designs. 
\newtheorem{lemma}{Lemma}
\begin{lemma}\label{lemma1}
The uniform design $\bar\xi_{d}$ on the set $\mathcal{X}^{(S)}_{d}$ of comparison depth $d$ has block diagonal information matrix
\begin{equation*}
\mathbf{M}(\bar\xi_{d})=diag(h_q(d)\mathbf{Id}_{p_q}\otimes \mathbf{M}^{\otimes q})_{q=1,\dots,4},
\end{equation*}
where $\mathbf{M}^{\otimes q}$ denotes the $q$-fold Kronecker product of  $\mathbf{M}$ and
\small{
\begin{equation*}
\begin{split}
&h_{1}(d)=\frac{d}{K},~h_{2}(d) =\frac{d((d-1)(v-2)+2(S-d)(v-1))}{2vK(K-1)},\\
&h_{3}(d) =\frac{d\lambda_1(d)}{4v^{2}K(K-1)(K-2)}, ~h_{4}(d) =\frac{d\lambda_2(d)}{8v^{3}K(K-1)(K-2)(K-3)}, \\
&\lambda_1(d)=(d-1)(d-2)(v^2-3v+3)+3(S-d)(d-1)(v-1)(v-2)\\
&~~~~~~~~~~~+3(S-d)(S-d-1)(v-1)^2,  \\
&\lambda_2(d)=(d-1)(d-2)(d-3)(v^3-4v^2+6v-4)\\
&~~~~~~~~~~~+4(S-d)(d-1)(d-2)(v^2-3v+3)(v-1)  \\
&~~~~~~~~~~~+ 6(S-d)(S-d-1)(d-1)(v-1)^2(v-2)\\
&~~~~~~~~~~~+4(S-d)(S-d-1)(S-d-2)(v-1)^3.  
\end{split}
\end{equation*}
}
\end{lemma}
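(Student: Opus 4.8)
The plan is to evaluate the defining average \eqref{eq:4} for $\bar\xi_d$ directly, exploiting the Kronecker structure of the regression function \eqref{eq:10} together with the permutation symmetry built into the orbit \eqref{eq:13}. Writing $\mathbf{M}(\bar\xi_d)=N_d^{-1}\sum_{(\mathbf{i},\mathbf{j})\in\mathcal{X}^{(S)}_d}(\mathbf{f}(\mathbf{i})-\mathbf{f}(\mathbf{j}))(\mathbf{f}(\mathbf{i})-\mathbf{f}(\mathbf{j}))^\top$, I would first partition $\mathbf{f}(\mathbf{i})-\mathbf{f}(\mathbf{j})$ according to \eqref{eq:10} into blocks indexed by the interaction order $q\in\{1,2,3,4\}$ and, within each order, by the $q$-element index set $T=\{k_1<\dots<k_q\}$; the $T$-block is $\bigotimes_t\mathbf{f}(i_{k_t})-\bigotimes_t\mathbf{f}(j_{k_t})$. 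Correspondingly $\mathbf{M}(\bar\xi_d)$ splits into blocks indexed by pairs $(T,T')$, and the claim amounts to showing that only the diagonal pairs $T=T'$ survive and that each surviving order-$q$ block equals $h_q(d)\,\mathbf{M}^{\otimes q}$.

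The structural observation is that, conditionally on which $S$ attributes are presented and which $d$ of them are active (satisfy $i_k\neq j_k$), the levels attached to distinct attributes are independent, each present attribute being uniform over $\{1,\dots,v\}$. Hence every block-average factorizes across attributes through the identity $(\mathbf{a}\otimes\mathbf{b})(\mathbf{c}\otimes\mathbf{d})^\top=\mathbf{a}\mathbf{c}^\top\otimes\mathbf{b}\mathbf{d}^\top$. For a pair $T\neq T'$ there is an attribute $\ell$ lying in exactly one of them, say $\ell\in T\setminus T'$; reordering the Kronecker coordinates by a fixed permutation so that the $\ell$-factor comes first, and averaging over the levels $i_\ell,j_\ell$ alone, annihilates the $T$-factor because the centred one-way-layout regression function has zero marginal mean, $v^{-1}\sum_i\mathbf{f}(i)=\mathbf{0}$. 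Thus the $(T,T')$-block vanishes. This forces block-diagonality, and since the presence/activity statistics of a $q$-set depend only on its size $q$, all $\binom{K}{q}$ order-$q$ diagonal blocks coincide, which is precisely the tensor form $\mathbf{Id}\otimes\mathbf{M}^{\otimes q}$ of the statement.

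For a single diagonal block I would expand the outer product of $\bigotimes_t\mathbf{f}(i_{k_t})-\bigotimes_t\mathbf{f}(j_{k_t})$ into four Kronecker products and average each tensor factor separately. Writing $\mathbf{C}=v^{-1}\sum_i\mathbf{f}(i)\mathbf{f}(i)^\top$, every attribute contributes $\mathbf{C}$ to the two ``square'' products; on the two ``mixed'' products an active attribute contributes $-\tfrac{1}{v-1}\mathbf{C}$ (again by centering, from $\tfrac{1}{v(v-1)}\sum_{i\neq j}\mathbf{f}(i)\mathbf{f}(j)^\top=-\tfrac{1}{v-1}\mathbf{C}$) while an inactive attribute contributes $\mathbf{C}$. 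Since every tensor factor is a scalar multiple of the same $\mathbf{C}$, the positions of the active attributes are immaterial, and a $q$-set with exactly $a$ active attributes contributes $2\bigl[1-(-\tfrac{1}{v-1})^a\bigr]\mathbf{C}^{\otimes q}$; in particular $a=0$ gives $\mathbf{0}$, consistent with inactive sets carrying no information. Recalling that $\mathbf{M}$ is the single-attribute paired-comparison information, i.e.\ $\mathbf{M}=\tfrac{2v}{v-1}\mathbf{C}$, this becomes $2\bigl[1-(-\tfrac{1}{v-1})^a\bigr]\bigl(\tfrac{v-1}{2v}\bigr)^q\mathbf{M}^{\otimes q}$, isolating all $v$-dependence in a scalar.

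It then remains to average this scalar over the randomness of the orbit: a fixed $q$-set is fully presented with probability $\binom{K-q}{S-q}/\binom{K}{S}$, and, given this, its number $a$ of active attributes is hypergeometric with weights $\binom{q}{a}\binom{S-q}{d-a}/\binom{S}{d}$. Multiplying out, the subset-size factorials cancel and each $a$ leaves a falling factorial $d(d-1)\cdots(d-a+1)\,(S-d)(S-d-1)\cdots(S-d-q+a+1)$ over $K(K-1)\cdots(K-q+1)$, so $h_q(d)$ is a degree-$q$ polynomial in $d$ and $S-d$ carrying an overall factor $d$. I expect the only real labour to be the final bookkeeping for $q=3,4$: collecting the $a$-indexed terms, substituting $1-(-\tfrac{1}{v-1})^a$, and clearing $(\tfrac{v-1}{2v})^q$. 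Organizing the sum by $a$ (whose binomial weights $\binom{q}{a}$ reproduce the coefficients $1,3,3$ and $1,4,6,4$ seen in $\lambda_1,\lambda_2$) and using that $(v-1)^a-(-1)^a$ is divisible by $v$ — for instance $(v-1)^2-1=v(v-2)$, $(v-1)^3+1=v(v^2-3v+3)$ and $(v-1)^4-1=v(v^3-4v^2+6v-4)$ — keeps the algebra mechanical and collapses the expressions to the stated $h_1(d),\dots,h_4(d)$. The main obstacle is therefore organizational rather than conceptual.
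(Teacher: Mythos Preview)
Your proposal is correct and follows essentially the same route as the paper: both arguments rest on the effects-coding identities $\sum_i\mathbf{f}(i)=\mathbf{0}$, $\sum_i\mathbf{f}(i)\mathbf{f}(i)^\top=\tfrac{v-1}{2}\mathbf{M}$ and $\sum_{i\neq j}\mathbf{f}(i)\mathbf{f}(j)^\top=-\tfrac{v-1}{2}\mathbf{M}$ to kill the off-diagonal blocks and to reduce each diagonal block to a scalar multiple of $\mathbf{M}^{\otimes q}$, with the scalar obtained by splitting according to how many of the $q$ attributes in the interaction set are active. Your hypergeometric framing and the single closed form $2\bigl[1-(-\tfrac{1}{v-1})^a\bigr]\mathbf{C}^{\otimes q}$ package this case analysis more uniformly than the paper, which works out $q=4$ explicitly case by case and defers $q\le 3$ to earlier work, but the underlying computation is identical.
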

Here, $\mathbf{Id}_m$ is the identity matrix of order $m$ for every $m$ and $\mathbf{M}=\frac{2}{v-1}(\mathbf{Id}_{v-1}+\mathbf{1}_{v-1}\mathbf{1}^{\top}_{v-1})$ is the information matrix of the one-way layout. 
\par

Further an invariant design $\bar{\xi}$ can be written as a convex combination $\bar{\xi}=\sum^{S}_{d=1}w_{d}\bar{\xi}_{d}$ of uniform designs on the comparison depths $d$ with weights $w_{d}\geq 0$, $\sum^{S}_{d=1}w_{d}=1$. Hence, also every invariant design has diagonal information matrix:
\begin{lemma}\label{lemma2}
The invariant design $\bar{\xi}$ on the design region $\mathcal{X}^{(S)}$ has information matrix of the form
\begin{equation*}
\mathbf{M}(\bar\xi)=diag(h_q(\bar\xi)\mathbf{Id}_{p_q}\otimes \mathbf{M}^{\otimes q})_{q=1,\dots,4},
\end{equation*}
where $h_{q}(\bar{\xi})=\sum_{d=1}^Sw_dh_q(d)$, $q=1,2,3,4$.  
\end{lemma}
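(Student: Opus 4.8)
The plan is to reduce everything to the linearity of the information matrix as a functional of the design, combined with Lemma~\ref{lemma1}. First I would note, directly from the defining formula \eqref{eq:4}, that $\mathbf{M}$ is a weighted sum of the rank-one elemental informations $(\textbf{f}(\textbf{i})-\textbf{f}(\textbf{j}))(\textbf{f}(\textbf{i})-\textbf{f}(\textbf{j}))^\top$ with the design weights $\bar\xi(\textbf{i},\textbf{j})$ as coefficients. Hence $\mathbf{M}$ depends additively on the design measure: inserting the representation $\bar\xi(\textbf{i},\textbf{j})=\sum_{d=1}^S w_d\,\bar\xi_d(\textbf{i},\textbf{j})$ into \eqref{eq:4} and interchanging the two finite sums gives
\[
\mathbf{M}(\bar\xi)=\sum_{d=1}^S w_d\,\mathbf{M}(\bar\xi_d).
\]

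Next I would substitute the explicit form supplied by Lemma~\ref{lemma1}, namely $\mathbf{M}(\bar\xi_d)=diag(h_q(d)\mathbf{Id}_{p_q}\otimes\mathbf{M}^{\otimes q})_{q=1,\dots,4}$. The decisive structural observation is that all of these matrices carry one and the same block partition, with the $q$th diagonal block equal to the fixed matrix $\mathbf{Id}_{p_q}\otimes\mathbf{M}^{\otimes q}$ --- independent of the comparison depth $d$ --- multiplied only by the depth-dependent scalar $h_q(d)$. A finite weighted sum of block-diagonal matrices sharing a common partition is again block-diagonal, its $q$th block being the weighted sum of the respective $q$th blocks.

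Collecting the scalars then closes the argument: the $q$th diagonal block of $\mathbf{M}(\bar\xi)$ becomes $\bigl(\sum_{d=1}^S w_d h_q(d)\bigr)\,\mathbf{Id}_{p_q}\otimes\mathbf{M}^{\otimes q}$, so that defining $h_q(\bar\xi)=\sum_{d=1}^S w_d h_q(d)$ reproduces precisely the asserted form. I do not expect a genuine obstacle; the entire content is the additivity of $\mathbf{M}$ in the design together with the fact that the Kronecker blocks in Lemma~\ref{lemma1} do not depend on $d$, so the $h_q(d)$ combine linearly while the common blocks $\mathbf{Id}_{p_q}\otimes\mathbf{M}^{\otimes q}$ are merely factored out. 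The only point meriting a word of care is the availability of the convex representation $\bar\xi=\sum_{d=1}^S w_d\bar\xi_d$ with $w_d\ge 0$ and $\sum_d w_d=1$ for an arbitrary invariant design, which is exactly the statement made in the paragraph preceding the lemma.
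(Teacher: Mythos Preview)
Your proposal is correct and matches the paper's own reasoning: the paper does not give a formal proof of Lemma~\ref{lemma2} in the Appendix but simply precedes the lemma with the sentence that every invariant design is a convex combination $\bar\xi=\sum_{d=1}^S w_d\bar\xi_d$ and writes ``Hence,'' treating the result as an immediate consequence of Lemma~\ref{lemma1} together with the linearity of $\mathbf{M}$ in the design. Your write-up makes explicit exactly this linearity argument and the fact that the Kronecker blocks are independent of $d$, which is all that is needed.
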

First we consider optimal designs for the main effects, the first-, second- and third-order interaction terms separately by maximizing the entries $h_{q}(d)$ in the corresponding information matrix $\mathbf{M}(\bar{\xi}_{d})$. The resulting designs can optimize every invariant design criterion if interest is in the full parameter vector of the corresponding main effects and interactions. Specifically, in the following Table \ref{Tab4.4h3PPROFILE} the values of $d^{\ast}$ recorded in brackets where the first, second and third entries correspond to the first-, second- and third-order interactions respectively, were obtained by first calculating the values of $h_q(d)$ and determining the maximum. Zero entries in the table indicate that the minimum number of attributes required for identifiability of the interactions is not available. It is worthwhile mentioning that the optimal comparison depth $d^{\ast}=S$ for the case of main effects, while for the case of first-order interactions $d^*=S/2$ for $S$ even as well as $d^*=(S-1)/2$ for $S$ odd in the presence of very moderate values of $v$ ($v=2$, for example) and $d^*=S-1$ for sufficiently large values of $v$ ($v=20$, for example). Further, for the case of second-order interactions $d^{\ast}=S$ but this is not true for the situation where $S=K=3$.  Moreover, for the case of third-order interactions $d^{\ast}=S-3$ for sufficiently large values of $v$ ($v=20$, for example). This means that for the corresponding main effects and interactions only those pairs of alternatives should be used which differ in the comparison depth $d^*$ subject to the profile strength $S$. %\vspace{-5ex}
\begin{landscape}
\begin{table}[H]
\centering
\caption{Values of the optimal comparison depths of the $D$-optimal uniform designs for $S=K$.}\label{Tab4.4h3PPROFILE}  
\begin{tabular}{ccccccccccc}\toprule
    \multicolumn{10}{c}{$v$}&\\
    \cline{2-11}
  $S$ & 2       &    3     &  4 & 5  &  6 & 7    &    8& 9&10& 20                         \\
    \hline
2&(1, 0, 0)&(1, 0, 0)&(1, 0, 0) &(1, 0, 0)&(1, 0, 0)&(1, 0, 0)&(1, 0, 0)&(1, 0, 0)&(1, 0, 0)&(1, 0, 0)\\   
3&(1, 1, 0)&(2, 1, 0)&(2, 1, 0) &(2, 1, 0)&(2, 1, 0)&(2, 1, 0)&(2, 1, 0)&(2, 1, 0)&(2, 1, 0)&(2, 1, 0)\\ 
4&(2, 4, 1)&(2, 4, 1)&(3, 4, 1) &(3, 4, 1)&(3, 4, 1)&(3, 4, 1)&(3, 4, 1)&(3, 4, 1)&(3, 4, 1)&(3, 4, 1)\\ 
5&(2, 5, 1)&(3, 5, 1)&(3, 5, 1) &(4, 5, 2)&(4, 5, 2)&(4, 5, 2)&(4, 5, 2)&(4, 5, 2)&(4, 5, 2)&(4, 5, 2)\\ 
6&(3, 6, 1)&(4, 6, 2)&(4, 6, 2) &(4, 6, 2)&(5, 6, 2)&(5, 6, 3)&(5, 6, 3)&(5, 6, 3)&(5, 6, 3)&(5, 6, 3)\\ 
7&(3, 7, 1)&(4, 7, 2)&(5, 7, 3) &(5, 7, 3)&(5, 7, 3)&(6, 7, 3)&(6, 7, 3)&(6, 7, 4)&(6, 7, 4)&(6, 7, 4)\\ 
8&(4, 8, 2)&(5, 8, 3)&(6, 8, 3) &(6, 8, 4)&(6, 8, 4)&(6, 8, 4)&(7, 8, 4)&(7, 8, 4)&(7, 8, 4)&(7, 8, 5)\\ 
9&(4, 9, 2)&(6, 9, 3)&(6, 9, 4) &(7, 9, 4)&(7, 9, 5)&(7, 9, 5)&(7, 9, 5)&(8, 9, 5)&(8, 9, 5)&(8, 9, 6)\\ 
10&(5, 10, 2)&(6, 10, 4)&(7, 10, 4) &(8, 10, 5)&(8, 10, 5)&(8, 10, 6)&(8, 10, 6)&(8, 10, 6)&(9, 10, 6)&(9, 10, 7)\\   \bottomrule
\end{tabular}
    \end{table}
\end{landscape}
\newtheorem{remark}{Remark}
\begin{remark}\label{theorem1}
The uniform design $\bar{\xi}_{S}$ on the largest possible comparison depth $S$ is $D$-optimal for the vector of main effects $(\boldsymbol{\beta}_{1}$ $\dots,$ $\boldsymbol{\beta}_{K})^{\top}$.
\end{remark}

\begin{remark}\label{theorem2}
The uniform design $\bar{\xi}_{d^*}$ is $D$-optimal for the vector of first-order interaction effects $(\boldsymbol{\beta}_{k\ell})_{k<\ell}^{\top}$.
\end{remark}

\begin{remark}\label{theorem3}
The uniform design $\bar{\xi}_{d^*}$ is $D$-optimal for the vector of second-order interaction effects $(\boldsymbol{\beta}_{k\ell})_{k<\ell}^{\top}$.
\end{remark}

\newtheorem{theorem}{Theorem}

\begin{remark}\label{thrm22}
The uniform design $\bar{\xi}_{d^*}$ is $D$-optimal for the vector of third-order interaction effects $(\boldsymbol{\beta}_{k\ell mr})_{k<\ell<m<r}^{\top}$.
\end{remark}

For the corresponding situation of main effects up to third-order interactions, obviously no design exists which simultaneously optimizes the information of the whole parameter vector $\boldsymbol{\beta}=((\boldsymbol{\beta}_k)_{k=1,\dots,K}^\top,(\boldsymbol{\beta}_{k\ell})_{k<\ell}^\top,(\boldsymbol{\beta}_{k\ell m})^\top_{k<\ell<m},$  $(\boldsymbol{\beta}_{k\ell mr})^\top_{k<\ell<m<r})^\top$. As a result, we restrict attention to the $D$-criterion to derive optimal designs for the corresponding whole parameter vector. To reach this goal, it suffices to mention that for invariant designs $\bar{\xi}$ the variance function $V((\textbf{i},\textbf{j}),\bar{\xi})=(\textbf{f}(\textbf{i})-\textbf{f}(\textbf{j}))^{\top}\textbf{M}(\bar{\xi})^{-1}(\textbf{f}(\textbf{i})-\textbf{f}(\textbf{j}))$ with nonsingular information matrix $\textbf{M}(\bar\xi)$ is also invariant with respect to permutations and, hence, constant on the orbits $\mathcal{X}^{(S)}_{d}$ of fixed comparison depth $d$. According to \citet{kiefer1960equivalence}, a design $\xi^{\ast}$ is $D$-optimal if $V((\textbf{i},\textbf{j}),\xi^{\ast})\leq p$.
Hence, the value of the variance function for an invariant design $\bar{\xi}$ evaluated at comparison depth $d$ may be denoted by $V(d,\bar{\xi})$, say, where $V(d,\bar{\xi})=V((\textbf{i},\textbf{j}),\bar{\xi})$ on $\mathcal{X}^{(S)}_{d}$. 
%%%
\begin{theorem}\label{thrm4}
For every invariant design $\bar{\xi}$ the variance function $V(d,\bar{\xi})$ is given by  
\begin{equation*}
\begin{split}
V(d, \bar{\xi})&=d(v-1)\Bigg(\frac{1}{h_{1}(\bar{\xi})}+\frac{v-1}{4vh_{2}(\bar{\xi})}\begin{pmatrix}(d-1)(v-2)+2(S-d)(v-1)\end{pmatrix}\\
&\qquad\qquad\qquad+\frac{(v-1)^2}{24v^2h_{3}(\bar{\xi})}\big((d-1)(d-2)(v^2-3v+3)  \\
&\qquad\qquad\qquad\qquad+3(S-d)(d-1)(v-1)(v-2) \\
&\qquad\qquad\qquad\qquad+3(S-d)(S-d-1)(v-1)^2\big)\\
&\qquad\qquad\qquad+\frac{(v-1)^4}{192v^3h_{4}(\bar{\xi})}\Big((d-1)(d-2)(d-3)(v^3-4v^2+6v-4) \\
&\qquad\qquad\qquad\qquad+4(S-d)(d-1)(d-2)(v^2-3v+3)(v-1)  \\
&\qquad\qquad\qquad\qquad+6(S-d)(S-d-1)(d-1)(v-1)^2(v-2)  \\
&\qquad\qquad\qquad\qquad+4(S-d)(S-d-1)(S-d-2)(v-1)^3\Big)\Bigg).
\end{split}
\end{equation*}
\end{theorem}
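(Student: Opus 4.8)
The plan is to turn the block-diagonal form of the information matrix from Lemma~\ref{lemma2} into a block-diagonal decomposition of the quadratic form $V(d,\bar\xi)=(\mathbf{f}(\mathbf{i})-\mathbf{f}(\mathbf{j}))^\top\mathbf{M}(\bar\xi)^{-1}(\mathbf{f}(\mathbf{i})-\mathbf{f}(\mathbf{j}))$. Since $\mathbf{M}(\bar\xi)=\mathrm{diag}(h_q(\bar\xi)\,\mathbf{Id}\otimes\mathbf{M}^{\otimes q})_{q=1,\dots,4}$ and $(\mathbf{A}\otimes\mathbf{B})^{-1}=\mathbf{A}^{-1}\otimes\mathbf{B}^{-1}$, its inverse is $\mathbf{M}(\bar\xi)^{-1}=\mathrm{diag}\big(h_q(\bar\xi)^{-1}\,\mathbf{Id}\otimes(\mathbf{M}^{-1})^{\otimes q}\big)_{q=1,\dots,4}$. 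Because the cross-blocks vanish, the variance function splits additively as
\[
V(d,\bar\xi)=\sum_{q=1}^{4}h_q(\bar\xi)^{-1}Q_q,
\]
where $Q_q$ is the design-free quadratic form of the order-$q$ component of $\mathbf{f}(\mathbf{i})-\mathbf{f}(\mathbf{j})$ against $\mathbf{Id}\otimes(\mathbf{M}^{-1})^{\otimes q}$. It then suffices to evaluate each $Q_q$ at a representative pair of the orbit $\mathcal{X}^{(S)}_d$.

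Next I would reduce each block to a sum over attribute-subsets. The order-$q$ component of $\mathbf{f}$ is indexed by the $\binom{K}{q}$ subsets $T$ of size $q$, and the factor $\mathbf{Id}$ renders $\mathbf{Id}\otimes(\mathbf{M}^{-1})^{\otimes q}$ block-diagonal across these subsets, so $Q_q$ is a sum of per-subset contributions. A subset containing an absent attribute contributes nothing, as that attribute supplies a zero Kronecker factor; hence only subsets of the $S$ present attributes survive. For such a $T$ I would invoke the mixed-product rule $(\bigotimes_t\mathbf{a}_t)^\top(\bigotimes_t\mathbf{B}_t)(\bigotimes_t\mathbf{c}_t)=\prod_t\mathbf{a}_t^\top\mathbf{B}_t\mathbf{c}_t$ to express its contribution as
\[
\prod_{k\in T}\mathbf{f}(i_k)^\top\mathbf{M}^{-1}\mathbf{f}(i_k)-2\prod_{k\in T}\mathbf{f}(i_k)^\top\mathbf{M}^{-1}\mathbf{f}(j_k)+\prod_{k\in T}\mathbf{f}(j_k)^\top\mathbf{M}^{-1}\mathbf{f}(j_k).
\]

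The third step evaluates the single-attribute forms. Writing $\mathbf{M}^{-1}=\tfrac{v-1}{2}(\mathbf{Id}_{v-1}-\tfrac1v\mathbf{1}_{v-1}\mathbf{1}^\top_{v-1})$ by Sherman--Morrison and inserting the one-way regression function, I would show that the diagonal value $b:=\mathbf{f}(i)^\top\mathbf{M}^{-1}\mathbf{f}(i)=\tfrac{(v-1)^2}{2v}$ is independent of the level and that the cross value $e:=\mathbf{f}(i)^\top\mathbf{M}^{-1}\mathbf{f}(j)=-\tfrac{v-1}{2v}$ is common to all pairs of distinct levels. Consequently, a subset $T$ with exactly $s$ differing and $q-s$ agreeing attributes has contribution $2b^q-2b^{q-s}e^s=2b^{q-s}(b^s-e^s)$, which vanishes precisely when $s=0$. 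This level-independence also re-confirms that $V$ is constant on each orbit, as claimed before the theorem, so evaluation at any representative is legitimate.

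Finally I would perform the combinatorial summation. Exactly $\binom{d}{s}\binom{S-d}{q-s}$ of the size-$q$ subsets of present attributes contain $s$ differing attributes, whence
\[
Q_q=\sum_{s=1}^{q}\binom{d}{s}\binom{S-d}{q-s}\,2b^{q-s}\big(b^s-e^s\big).
\]
Substituting $b$ and $e$, using $b-e=\tfrac{v-1}{2}$ and the factorizations of $b^s-e^s$ for $s\le4$ (which generate the factors $v^2-3v+3$ and $v^3-4v^2+6v-4$), I would simplify: $q=1$ gives $d(v-1)$, $q=2$ the bracket $(d-1)(v-2)+2(S-d)(v-1)$, $q=3$ the polynomial $\lambda_1(d)$, and $q=4$ the polynomial $\lambda_2(d)$ of Lemma~\ref{lemma1}. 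Reassembling $V=\sum_q h_q(\bar\xi)^{-1}Q_q$ and factoring $d(v-1)$ out of all four pieces then yields the displayed formula. I expect the main obstacle to be the $q=4$ bookkeeping: marshalling the four summands $s=1,2,3,4$, each with its own power $b^{4-s}(b^s-e^s)$ and falling-factorial count $\binom{S-d}{4-s}$, into the four-term polynomial $\lambda_2(d)$ demands careful algebra and correct matching of the $(S-d)(S-d-1)(S-d-2)$-type factors.
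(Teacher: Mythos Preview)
Your proposal is correct and follows essentially the same route as the paper: invert the block-diagonal information matrix from Lemma~\ref{lemma2}, split the variance function into per-order contributions $Q_q$, reduce each to a sum over attribute subsets via the Kronecker mixed-product rule, evaluate the single-attribute forms (your $b=\tfrac{(v-1)^2}{2v}$ and $e=-\tfrac{v-1}{2v}$ are precisely the values behind the paper's four-case display), and then count subsets by the number $s$ of differing attributes. The paper does only the $q=4$ block explicitly and cites earlier work for $q\le 3$, whereas you treat all four orders uniformly with the $b,e$ shorthand, but the underlying computation is identical; just note that your final step will actually produce $(v-1)^3$ rather than the $(v-1)^4$ printed in the theorem's last prefactor, which is a typo in the statement once the common factor $d(v-1)$ is pulled out.
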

On a single comparison depth $d^{\prime}$ the corresponding variance function $V(d, \bar{\xi})$ simplifies:
\newtheorem{corollary}{Corollary}
\begin{corollary}\label{cor_thrm4}
For a uniform design $\bar\xi_{d^{\prime}}$ the variance function is given by 
\begin{equation*}
\begin{split}
V(d,\bar{\xi}_{d^{\prime}})=\frac{d}{d^{\prime}}\begin{pmatrix}p_{1}+p_{2}\frac{(d-1)(v-2)+2(S-d)(v-1)}{(d^{\prime}-1)(v-2)+2(S-d^{\prime})(v-1)}+p_{3}\frac{\lambda_1(d)}{\lambda_1(d^{\prime})}+p_{4}\frac{\lambda_2(d)}{\lambda_2(d^{\prime})}\end{pmatrix}.
\end{split}
\end{equation*}
\end{corollary}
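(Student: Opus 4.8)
The plan is to obtain the Corollary as a direct specialization of Theorem~\ref{thrm4} to the degenerate invariant design that places all its mass on a single comparison depth $d'$. First I would observe that $\bar\xi_{d'}$ is precisely the invariant design of Lemma~\ref{lemma2} with weights $w_{d'}=1$ and $w_d=0$ for $d\neq d'$; hence $h_q(\bar\xi_{d'})=\sum_{d=1}^S w_d h_q(d)=h_q(d')$ for each $q=1,2,3,4$. Substituting these four values into the formula of Theorem~\ref{thrm4} is then the only real step.

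The key structural observation that makes the substitution collapse cleanly is that, for each $q$, the polynomial in $d$ appearing in the $q$th bracket of $V(d,\bar\xi)$ in Theorem~\ref{thrm4} is exactly the factor multiplying $d'$ in the numerator of $h_q(d')$ as recorded in Lemma~\ref{lemma1}, but evaluated at $d$ rather than $d'$. Concretely, the $h_2$-bracket is $(d-1)(v-2)+2(S-d)(v-1)$, the $h_3$-bracket equals $\lambda_1(d)$, and the $h_4$-bracket equals $\lambda_2(d)$. Writing each $h_q(d')$ from Lemma~\ref{lemma1} in the form $h_q(d')=\frac{d'}{D_q}B_q(d')$, with $D_1=K$, $D_2=2vK(K-1)$, $D_3=4v^2K(K-1)(K-2)$, $D_4=8v^3K(K-1)(K-2)(K-3)$ and $B_q$ the relevant bracket polynomial, the $q$th summand of $V(d,\bar\xi_{d'})$ becomes $\frac{d}{d'}\,(v-1)c_q D_q\,\frac{B_q(d)}{B_q(d')}$, where $c_q$ denotes the constant prefactor $1,\ \tfrac{v-1}{4v},\ \tfrac{(v-1)^2}{24v^2},\ \tfrac{(v-1)^4}{192v^3}$.

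Next I would verify that $(v-1)c_q D_q=p_q$ for each $q$, so that the constant in every term reorganizes exactly into the corresponding parameter count. For the main-effect term, $d(v-1)/h_1(d')=d(v-1)K/d'=\frac{d}{d'}\,p_1$, using $p_1=K(v-1)$. For the first-order term the powers of $v$ cancel and $(v-1)^2K(K-1)/2$ assembles into $p_2$, leaving $\frac{d}{d'}\,p_2\,\frac{(d-1)(v-2)+2(S-d)(v-1)}{(d'-1)(v-2)+2(S-d')(v-1)}$. The second- and third-order terms proceed identically: the $v^2$ and $v^3$ denominators cancel against those inside $h_3(d')$ and $h_4(d')$, the products $K(K-1)(K-2)$ and $K(K-1)(K-2)(K-3)$ assemble with the appropriate powers of $(v-1)$ into $p_3$ and $p_4$, and what survives is $\frac{d}{d'}\,p_3\,\lambda_1(d)/\lambda_1(d')$ and $\frac{d}{d'}\,p_4\,\lambda_2(d)/\lambda_2(d')$. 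Factoring out the common $d/d'$ and collecting the four summands yields the asserted expression.

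The computation is purely algebraic, so there is no conceptual obstacle; the work lies entirely in careful bookkeeping of the numerical constants and of the powers of $v$ and $(v-1)$ in each term. The delicate point is precisely the matching of these powers in the third- and fourth-order terms, where the $v^2$ and $v^3$ denominators must cancel against those hidden in $h_3(d')$ and $h_4(d')$ and the accumulated power of $(v-1)$ must land on exactly $(v-1)^3$ and $(v-1)^4$ to recover $p_3$ and $p_4$; once this cancellation is checked, the ratios $\lambda_1(d)/\lambda_1(d')$ and $\lambda_2(d)/\lambda_2(d')$ emerge automatically and the Corollary follows.
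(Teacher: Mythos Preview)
Your approach is exactly the paper's: substitute $h_q(\bar\xi_{d'})=h_q(d')$ from Lemma~\ref{lemma1} into the variance formula of Theorem~\ref{thrm4} and recognise that the resulting numerical constants assemble into $p_q=\binom{K}{q}(v-1)^q$. One small bookkeeping slip to watch: the prefactor you list as $c_4=(v-1)^4/(192v^3)$ is copied from what appears to be a typo in the displayed statement of Theorem~\ref{thrm4} (the proof in the Appendix gives $d(v-1)^4/(192v^3h_4)$, so after factoring out $d(v-1)$ the inner coefficient should be $(v-1)^3/(192v^3)$); with $c_4=(v-1)^3/(192v^3)$ your identity $(v-1)c_qD_q=p_q$ holds for $q=4$ as well and the argument goes through verbatim.
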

%\begin{proof}
%In view of Theorem~\ref{thrm19} it is sufficient to note that the representation of the variance function follows immediately by inserting the values of $h_{q}(\xi_d)$ from Lemma \ref{lem9} and $p_q={K \choose q}(v-1)^{q}$, $q=1,2,3$. 
%\end{proof}
For $d=d^{\prime}$ we obtain $V(d,\bar{\xi}_d)=p_1+p_2+p_3+p_4=p$ which shows the $D$-optimality of $\bar{\xi}_d$ on $\mathcal{X}^{(S)}_d$ in view of the Kiefer-Wolfowitz equivalence theorem \citep{kiefer1960equivalence}.  \par

The following theorem gives the maximum number of comparison depths required for a $D$-optimal design by virtue of the equivalence theorem.
\begin{theorem}\label{theorem5}
The $D$-optimal design $\xi^{\ast}$ is supported on, at most, four different comparison depths. An example of such depths is $d^{\ast},d^*_1,d^*+1$ and $d^{\ast}_1+1$.
\end{theorem}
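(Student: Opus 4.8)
The plan is to regard the variance function from Theorem~\ref{thrm4} as a polynomial in the comparison depth $d$ and then to invoke the Kiefer--Wolfowitz equivalence theorem. The starting point is that for a \emph{fixed} invariant design $\bar\xi$ the numbers $h_1(\bar\xi),\dots,h_4(\bar\xi)$ are constants, so $V(d,\bar\xi)$ is a function of the single variable $d$ alone. Reading off the four summands in Theorem~\ref{thrm4}, the summand carrying $1/h_q(\bar\xi)$ equals $d(v-1)$ times a polynomial in $d$ of degree $q-1$: degree $0$ for the main-effect term, degree $1$ for the first-order term, the quadratic $\lambda_1(d)$ for the second-order term, and the cubic $\lambda_2(d)$ for the third-order term. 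Hence the $q$th summand has degree $q$ in $d$, and $V(\cdot,\bar\xi)$ is a polynomial of degree at most four.

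Next I would verify that the degree is \emph{exactly} four, so that no accidental cancellation lowers it. The only source of a $d^4$ term is $d\,\lambda_2(d)$, so I would extract the coefficient of $d^3$ in $\lambda_2(d)$. Collecting the leading coefficients of the four products $(d-1)(d-2)(d-3)$, $(S-d)(d-1)(d-2)$, $(S-d)(S-d-1)(d-1)$ and $(S-d)(S-d-1)(S-d-2)$ (namely $+1,-1,+1,-1$) against their respective weights and simplifying, all powers $v^2,v,1$ cancel and this coefficient reduces to $-v^3$. Consequently the $d^4$ coefficient of $V(\cdot,\bar\xi)$ equals $-(v-1)^5/\bigl(192\,h_4(\bar\xi)\bigr)$, which is nonzero whenever the information matrix is nonsingular, i.e.\ whenever $h_4(\bar\xi)>0$. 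Setting $g(d)=p-V(d,\bar\xi)$ therefore yields a genuine quartic with strictly positive leading coefficient $(v-1)^5/\bigl(192\,h_4(\bar\xi)\bigr)$.

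I would then apply the equivalence theorem of \citet{kiefer1960equivalence}: a $D$-optimal invariant design $\xi^\ast$ satisfies $V(d,\xi^\ast)\le p$ for all $d\in\{1,\dots,S\}$, with equality exactly at the comparison depths in its support. Thus the support is contained in the zero set $\{d:g(d)=0\}$; since $g$ is a nonzero polynomial of degree four it has at most four real roots, and so $\xi^\ast$ is supported on at most four comparison depths, which is the first assertion.

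Finally, for the explicit example I would exploit the sign pattern of $g$. A quartic with positive leading coefficient is negative only on the two open intervals lying between its first and second and between its third and fourth roots. When the support consists of four depths these four roots are integers in $\{1,\dots,S\}$ at which $g$ vanishes, and the requirement $g(d)\ge 0$ at \emph{every} integer depth forbids any integer strictly inside those two negative intervals; this forces the roots to split into two pairs of consecutive integers $\{d^\ast,d^\ast+1\}$ and $\{d^\ast_1,d^\ast_1+1\}$, which is the stated example. To show that such a design actually exists one chooses $d^\ast,d^\ast_1$, solves the four linear equations $g(d^\ast)=g(d^\ast+1)=g(d^\ast_1)=g(d^\ast_1+1)=0$ for the weights on these depths (equivalently for $h_1,\dots,h_4$), and checks the global bound $V(d,\xi^\ast)\le p$ at the remaining depths. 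I expect this last step --- confirming $g\ge 0$ off the support rather than merely at its four prescribed roots --- to be the main obstacle, since it is precisely where the degree-four bound is tight and the consecutive-pair structure is indispensable.
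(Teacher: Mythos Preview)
Your argument follows the paper's own proof almost exactly: both identify $V(\cdot,\xi^\ast)$ as a quartic in $d$ with negative leading coefficient, invoke the Kiefer--Wolfowitz equivalence theorem to get $V(d,\xi^\ast)\le p$ on $\{0,1,\dots,S\}$, and then use the shape of a quartic to conclude that equality can occur on at most two pairs of adjacent integers. The paper delegates this last shape argument to an external reference (\citet[Theorem~3]{eric2019optimal4444}), whereas you spell it out via the sign pattern of $g=p-V$; your explicit computation that the $d^3$ coefficient of $\lambda_2$ equals $-v^3$ is a genuine addition the paper omits. Your final paragraph about solving for weights and verifying $g\ge 0$ globally goes beyond what the theorem asserts---the statement only bounds the support size and describes its structure, it does not claim a four-point optimum actually occurs---so that step is unnecessary here.
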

The optimal design for the full parameter vector are presented in Table \ref{tab4.7}, where numerical computations indicate that at most two different comparison depths $d^*$ and $d^{\ast}_1$ may be required for $D$-optimality. The corresponding optimal designs with their optimal comparison depths $d^{\ast}$ (in boldface) and their corresponding weights $w^{\ast}_{d^{\ast}}$ for various choices of attributes $K=5,\dots,10$ and levels $v=2,\dots,8$ are exhibited in Table \ref{tab4.7} where entries of the form $(d^{\ast},d^*_1, w_{d^*}^{\ast})$ indicate that invariant designs $\xi^*=w_{d^*}^*\xi_{d^*}+(1-w_{d^*}^*)\xi_{d^*_1}$ have to be considered, while for single entries $d^{\ast}$ the optimal design $\xi^*= \xi^*_{d^*}$ has to be considered which is uniform on the optimal comparison depth $d^{\ast}$. For the particular case $S=K=4$ of full profiles the uniform design on all possible comparisons proves to be optimal \citep[see][Theorem~4]{grasshoff2003optimal}. It is worth noting that for the case of binary attributes, the corresponding results can be found in \citet{eric2019optimal4444}. The values of the normalized variance function $V(d,\xi^{\ast})/p$ which show $D$-optimality of the design $\xi^{\ast}$ in view of the \citet{kiefer1960equivalence} equivalence theorem are recorded in Table \ref{tab4} in the Appendix, where maximal values less than or equal to $1$ establish optimality. It can be seen that for moderate values of $v$ ($v=2$, for example) two types of pairs have to be used in which the numbers of distinct attributes are symmetric with respect to about half of the profile strength to obtain a $D$-optimal design for the whole parameter vector, while for large values of $v$ only one type of pair is sufficient but this is not true for the case $S=6$ and $v=3$.
% \begin{landscape}
 
\begin{table}[H]
\centering
\caption{Optimal comparison depths and optimal weights for $S=K$.}\label{tab4.7} 
\resizebox{!}{.11\paperheight}{
\begin{tabular}{cccccccc}\toprule
    \multicolumn{6}{c}{$v$}&\\
    \cline{2-8}
    $S$       &    2     &  3 & 4  &  5  & 6    &    7&  8                                \\
    \hline
    
5&(\textbf{2}, \textbf{4},\ 0.665)&\textbf{2}&\textbf{2}&\textbf{2}&\textbf{2}&\textbf{2}&\textbf{2}\\

6&(\textbf{2}, \textbf{5},\ 0.714)&(\textbf{2}, \textbf{5},\ 0.878)&\textbf{3}&\textbf{3}&\textbf{3}&\textbf{3}&\textbf{3}\\

7&(\textbf{2}, \textbf{6},\ 0.750)&\textbf{3}&\textbf{3}&\textbf{3}&\textbf{3}&\textbf{4}&\textbf{4}\\

8&(\textbf{3}, \textbf{6},\ 0.667)&\textbf{3}&\textbf{4}&\textbf{4}&\textbf{4}&\textbf{4}&\textbf{4}\\

9&(\textbf{3}, \textbf{7},\ 0.700)&\textbf{4}&\textbf{4}&\textbf{5}&\textbf{5}&\textbf{5}&\textbf{5}\\

10&(\textbf{3}, \textbf{8},\ 0.727)&\textbf{4}&\textbf{5}&\textbf{5}&\textbf{6}&\textbf{6}&\textbf{6}\\ \bottomrule
\end{tabular}}
\end{table}
% \end{landscape} 
 
\vspace{5mm}
\noindent
\textbf{Acknowledgements}
This work was partially supported by Grant - Doctoral Programmes in Germany, $2016/2017~(57214224)$ - of the German Academic Exchange Service (DAAD).
%%%%%%%%%%%%%%%%%%%
\bibliographystyle{apa}     
\bibliography{reference4th}   
%%%%%%%%%%%%%%%%%%%
\begin{appendices}
\section*{APPENDIX}
\begin{proof}[Proof of Lemma~\ref{lemma1}]
First we note that the quantities $h_q(d)$ for $q=1,2,3$ can be obtained as in \citep{grasshoff2003optimal,eric2019optimal4444}. For $h_4(d)$ we proceed by first noting that $\sum^{v}_{i=1}\textbf{f}(i)\textbf{f}(i)^{\top}=\frac{v-1}{2}\textbf{M}$ and $\sum_{i\neq j}\textbf{f}(i)\textbf{f}(j)^{\top}=-\frac{v-1}{2}\textbf{M}$. \par

For the third-order interactions we consider attributes $k$, $\ell$, $m$ and $r$, say, and distinguish between pairs $(i_ki_{\ell}i_mi_r)$ and $(j_kj_{\ell}j_mj_r)$ in which all the four associated attributes $k$, $\ell, m$ and $r$ differ, pairs $(i_ki_{\ell}i_mi_r)$ and $(j_kj_{\ell}j_mj_r)$ which differ in three of these attributes $k$, $\ell$ and $m$, say, pairs $(i_ki_{\ell}i_mi_r)$ and $(j_kj_{\ell}j_mj_r)$ which differ in two of these attributes $k$ and $\ell$, say, and finally, pairs $(i_ki_{\ell}i_mi_r)$ and $(j_kj_{\ell}j_mj_r)$ which differ in only one of the attributes $k$, say. Hence 

\begin{align}\label{eq:4.32}
&\sum_{i_{k}\neq j_{k}}\sum_{i_{\ell}\neq j_{\ell}}\sum_{i_{m}\neq j_{m}}\sum_{i_{r}\neq j_{r}}(\textbf{f}(i_{k})\otimes\textbf{f}(i_{\ell})\otimes\textbf{f}(i_{m})\otimes\textbf{f}(i_{r})-\textbf{f}(j_{k})\otimes\textbf{f}(j_{\ell})\otimes\textbf{f}(j_{m})\otimes\textbf{f}(j_{r})) \nonumber\\
&\qquad\qquad\qquad\cdot(\textbf{f}(i_{k})\otimes\textbf{f}(i_{\ell})\otimes\textbf{f}(i_{m})\otimes\textbf{f}(i_{r})-\textbf{f}(j_{k})\otimes\textbf{f}(j_{\ell})\otimes\textbf{f}(j_{m})\otimes\textbf{f}(j_{r}))^{\top} \nonumber\\
&=2(v-1)^{4}\sum^{v}_{i_{k}=1}\textbf{f}(i_{k})\textbf{f}(i_{k})^{\top}\otimes\sum^{v}_{i_{\ell}=1}\textbf{f}(i_{\ell})\textbf{f}(i_{\ell})^{\top}\otimes\sum^{v}_{i_{m}=1}\textbf{f}(i_{m})\textbf{f}(i_{m})^{\top}\otimes\sum^{v}_{i_{m}=1}\textbf{f}(i_{r})\textbf{f}(i_{r})^{\top}  \nonumber\\
&\qquad~~~-2\sum_{i_{k}\neq j_{k}}\textbf{f}(i_{k})\textbf{f}(j_{k})^{\top}\otimes\sum_{i_{\ell}\neq j_{\ell}}\textbf{f}(i_{\ell})\textbf{f}(j_{\ell})^{\top}\otimes\sum_{i_{m}\neq j_{m}}\textbf{f}(i_{m})\textbf{f}(j_{m})^{\top}\otimes\sum_{i_{r}\neq j_{r}}\textbf{f}(i_{r})\textbf{f}(j_{r})^{\top}   \nonumber\\
&=\frac{1}{8}v(v-1)^4(v^3-4v^2+6v-4)\textbf{M}^{\otimes q},
\end{align}
also
\begin{align}\label{eq:4.33}
&\sum_{i_{k}\neq j_{k}}\sum_{i_{\ell}\neq j_{\ell}}\sum_{i_{m}\neq j_{m}}\sum_{i_{r}= j_{r}}(\textbf{f}(i_{k})\otimes\textbf{f}(i_{\ell})\otimes\textbf{f}(i_{m})\otimes\textbf{f}(i_{r})-\textbf{f}(j_{k})\otimes\textbf{f}(j_{\ell})\otimes\textbf{f}(j_{m})\otimes\textbf{f}(j_{r})) \nonumber\\
&\qquad\qquad\qquad\cdot(\textbf{f}(i_{k})\otimes\textbf{f}(i_{\ell})\otimes\textbf{f}(i_{m})\otimes\textbf{f}(i_{r})-\textbf{f}(j_{k})\otimes\textbf{f}(j_{\ell})\otimes\textbf{f}(j_{m})\otimes\textbf{f}(j_{r}))^{\top} \nonumber\\
&=2(v-1)^{3}\sum^{v}_{i_{k}=1}\textbf{f}(i_{k})\textbf{f}(i_{k})^{\top}\otimes\sum^{v}_{i_{\ell}=1}\textbf{f}(i_{\ell})\textbf{f}(i_{\ell})^{\top}\otimes\sum^{v}_{i_{m}=1}\textbf{f}(i_{m})\textbf{f}(i_{m})^{\top}\otimes\sum^{v}_{i_{m}=1}\textbf{f}(i_{r})\textbf{f}(i_{r})^{\top}  \nonumber\\
&\qquad-2\sum_{i_{k}\neq j_{k}}\textbf{f}(i_{k})\textbf{f}(j_{k})^{\top}\otimes\sum_{i_{\ell}\neq j_{\ell}}\textbf{f}(i_{\ell})\textbf{f}(j_{\ell})^{\top}\otimes\sum_{i_{m}\neq j_{m}}\textbf{f}(i_{m})\textbf{f}(j_{m})^{\top}\otimes\sum_{i_{r}= j_{r}}\textbf{f}(i_{r})\textbf{f}(j_{r})^{\top}   \nonumber\\
&=\frac{1}{8}v(v-1)^4(v^2-3v+3)\textbf{M}^{\otimes q},
\end{align}
%%%%%%
further
\begin{align}\label{eq:4.34}
&\sum_{i_{k}\neq j_{k}}\sum_{i_{\ell}\neq j_{\ell}}\sum_{i_{m}= j_{m}}\sum_{i_{r}= j_{r}}(\textbf{f}(i_{k})\otimes\textbf{f}(i_{\ell})\otimes\textbf{f}(i_{m})\otimes\textbf{f}(i_{r})-\textbf{f}(j_{k})\otimes\textbf{f}(j_{\ell})\otimes\textbf{f}(j_{m})\otimes\textbf{f}(j_{r})) \nonumber\\
&\qquad\qquad\qquad\cdot(\textbf{f}(i_{k})\otimes\textbf{f}(i_{\ell})\otimes\textbf{f}(i_{m})\otimes\textbf{f}(i_{r})-\textbf{f}(j_{k})\otimes\textbf{f}(j_{\ell})\otimes\textbf{f}(j_{m})\otimes\textbf{f}(j_{r}))^{\top} \nonumber\\
&=2(v-1)^{2}\sum^{v}_{i_{k}=1}\textbf{f}(i_{k})\textbf{f}(i_{k})^{\top}\otimes\sum^{v}_{i_{\ell}=1}\textbf{f}(i_{\ell})\textbf{f}(i_{\ell})^{\top}\otimes\sum^{v}_{i_{m}=1}\textbf{f}(i_{m})\textbf{f}(i_{m})^{\top}\otimes\sum^{v}_{i_{m}=1}\textbf{f}(i_{r})\textbf{f}(i_{r})^{\top}  \nonumber\\
&\qquad-2\sum_{i_{k}\neq j_{k}}\textbf{f}(i_{k})\textbf{f}(j_{k})^{\top}\otimes\sum_{i_{\ell}\neq j_{\ell}}\textbf{f}(i_{\ell})\textbf{f}(j_{\ell})^{\top}\otimes\sum_{i_{m}= j_{m}}\textbf{f}(i_{m})\textbf{f}(j_{m})^{\top}\otimes\sum_{i_{r}= j_{r}}\textbf{f}(i_{r})\textbf{f}(j_{r})^{\top}   \nonumber\\
&=\frac{1}{8}v(v-1)^4(v-2)\textbf{M}^{\otimes q},
\end{align}
%%%%%%
and, finally
\begin{align}\label{eq:4.35}
&\sum_{i_{k}\neq j_{k}}\sum_{i_{\ell}= j_{\ell}}\sum_{i_{m}= j_{m}}\sum_{i_{r}= j_{r}}(\textbf{f}(i_{k})\otimes\textbf{f}(i_{\ell})\otimes\textbf{f}(i_{m})\otimes\textbf{f}(i_{r})-\textbf{f}(j_{k})\otimes\textbf{f}(j_{\ell})\otimes\textbf{f}(j_{m})\otimes\textbf{f}(j_{r})) \nonumber\\
&\qquad\qquad\qquad\cdot(\textbf{f}(i_{k})\otimes\textbf{f}(i_{\ell})\otimes\textbf{f}(i_{m})\otimes\textbf{f}(i_{r})-\textbf{f}(j_{k})\otimes\textbf{f}(j_{\ell})\otimes\textbf{f}(j_{m})\otimes\textbf{f}(j_{r}))^{\top} \nonumber\\
&=2(v-1)\sum^{v}_{i_{k}=1}\textbf{f}(i_{k})\textbf{f}(i_{k})^{\top}\otimes\sum^{v}_{i_{\ell}=1}\textbf{f}(i_{\ell})\textbf{f}(i_{\ell})^{\top}\otimes\sum^{v}_{i_{m}=1}\textbf{f}(i_{m})\textbf{f}(i_{m})^{\top}\otimes\sum^{v}_{i_{m}=1}\textbf{f}(i_{r})\textbf{f}(i_{r})^{\top}  \nonumber\\
&\qquad-2\sum_{i_{k}\neq j_{k}}\textbf{f}(i_{k})\textbf{f}(j_{k})^{\top}\otimes\sum_{i_{\ell}= j_{\ell}}\textbf{f}(i_{\ell})\textbf{f}(j_{\ell})^{\top}\otimes\sum_{i_{m}= j_{m}}\textbf{f}(i_{m})\textbf{f}(j_{m})^{\top}\otimes\sum_{i_{r}= j_{r}}\textbf{f}(i_{r})\textbf{f}(j_{r})^{\top}   \nonumber\\
&=\frac{1}{8}v(v-1)^4\textbf{M}^{\otimes q},
\end{align}
%%%%%%
where $\textbf{M}^{\otimes q}$, $q=1,2,3,4$ is the $q$-fold Kronecker product of $\textbf{M}$. \par
Now for the given attributes $k$, $\ell$, $m$ and $r$ the pairs with distinct levels in the four attributes occur $\left(\begin{smallmatrix}K-4 \\ S-4\end{smallmatrix}\right)\left(\begin{smallmatrix}S-4 \\ d-4\end{smallmatrix}\right)v^{S-4}(v-1)^{d-4}$ times in $\mathcal{X}^{(S)}_d$, while those which differ in three attributes occur $4\left(\begin{smallmatrix}K-4 \\ S-4\end{smallmatrix}\right)\left(\begin{smallmatrix}S-4 \\ d-3\end{smallmatrix}\right)v^{S-4}(v-1)^{d-3}$ times in $\mathcal{X}^{(S)}_d$, while those which differ in two attributes occur~$6\left(\begin{smallmatrix}K-4 \\ S-4\end{smallmatrix}\right)\left(\begin{smallmatrix}S-4 \\ d-2\end{smallmatrix}\right)$ $v^{S-4}(v-1)^{d-2}$ times in $\mathcal{X}^{(S)}_d$. Finally, those which differ only in one attribute occur $4\left(\begin{smallmatrix}K-4 \\ S-4\end{smallmatrix}\right)\left(\begin{smallmatrix}S-4 \\ d-1\end{smallmatrix}\right)v^{S-3}(v-1)^{d-1}$
times. Hence, the diagonal blocks for the interactions are given by
\begin{equation*}\label{eq:4.35}
\begin{split}
&\frac{1}{N_d}\Big(\begin{matrix}K-4 \\ S-4\end{matrix}\Big)\bigg(\frac{1}{8}\Big(\begin{matrix}S-4 \\ d-4\end{matrix}\Big)v^{S-3}(v-1)^d(v^3-4v^2+6v-4)\textbf{M}^{\otimes q}  \\
&\qquad+\frac{1}{2}\Big(\begin{matrix}S-4 \\ d-3\end{matrix}\Big)v^{S-3}(v-1)^{d+1}(v^2-3v+3)\textbf{M}^{\otimes q} \\
&\qquad+\frac{3}{4}\Big(\begin{matrix}S-4 \\ d-2\end{matrix}\Big)v^{S-3}(v-1)^{d+2}(v-2) \textbf{M}^{\otimes q} \\
&\qquad+\frac{1}{2}\Big(\begin{matrix}S-4 \\ d-1\end{matrix}\Big)v^{S-3}(v-1)^{d+3}\textbf{M}^{\otimes q} 
\bigg)\\
&=\frac{d}{8v^3K(K-1)(K-2)(K-3)}((d-1)(d-2)(d-3)(v^3-4v^2+6v-4)  \\
&\qquad\qquad+4(S-d)(d-1)(d-2)(v^2-3v+3)(v-1)  \\
&\qquad\qquad+6(S-d)(S-d-1)(d-1)(v-1)^2(v-2)  \\
&\qquad\qquad+4(S-d)(S-d-1)(S-d-2)(v-1)^3)\textbf{M}^{\otimes q}. 
\end{split}
\end{equation*}
The off-diagonal elements all vanish because the terms in the corresponding entries sum up to zero due to the effects-type coding.
\end{proof}

%%%%%%%%%
%%%%%%%%%
\begin{proof}[Proof of Theorem~\ref{thrm4}]
First we note that the inverse of the information matrix $\mathbf{M}(\bar{\xi})$ of the design $\bar{\xi}$ is given by

\begin{equation*}
\begin{split}
\mathbf{M}(\bar{\xi})^{-1}= diag( \frac{1}{h_{q}(\bar{\xi})}\mathbf{Id}_{p_q}\otimes \mathbf{M}^{\otimes q})_{q=1,\dots,4}.
\end{split}
\end{equation*}

Now, in view of  \citet[][Theorem 2]{eric2019optimal4444}, it follows that for the regression function associated with the interaction of the attributes $k$, $\ell$, $m$ and $r$, say, we obtain

\begin{equation*}\label{eqn4provingthevariance function}
\begin{split}
&\resizebox{1.0\hsize}{!}{$(\textbf{f}(i_{k})\otimes\textbf{f}(i_{\ell})\otimes\textbf{f}(i_{m})\otimes\textbf{f}(i_{r})-\textbf{f}(j_{k})\otimes\textbf{f}(j_{\ell})\otimes\textbf{f}(j_{m})\otimes\textbf{f}(j_{r}))^{\top}\textbf{M}^{-1}\otimes\textbf{M}^{-1}\otimes\textbf{M}^{-1}\otimes\textbf{M}^{-1}$}  \\
&\qquad\cdot\textbf{f}(i_{k})\otimes\textbf{f}(i_{\ell})\otimes\textbf{f}(i_{m})\otimes\textbf{f}(i_{r})-\textbf{f}(j_{k})\otimes\textbf{f}(j_{\ell})\otimes\textbf{f}(j_{m})\otimes\textbf{f}(j_{r}))\\
&=\textbf{f}(i_{k})^{\top}\textbf{M}^{-1}\textbf{f}(i_{k}) \cdot \textbf{f}(i_{\ell})^{\top}\textbf{M}^{-1}\textbf{f}(i_{\ell})\cdot \textbf{f}(i_{m})^{\top}\textbf{M}^{-1}\textbf{f}(i_{m})\cdot \textbf{f}(i_{r})^{\top}\textbf{M}^{-1}\textbf{f}(i_{r})\\
&\qquad+ \textbf{f}(j_{k})^{\top}\textbf{M}^{-1}\textbf{f}(j_{k})\cdot \textbf{f}(j_{\ell})^{\top}\textbf{M}^{-1}\textbf{f}(j_{\ell})\cdot \textbf{f}(j_{m})^{\top}\textbf{M}^{-1}\textbf{f}(j_{m})\cdot \textbf{f}(j_{r})^{\top}\textbf{M}^{-1}\textbf{f}(j_{r}) \\
&\qquad-\textbf{f}(i_{k})^{\top}\textbf{M}^{-1}\textbf{f}(j_{k})\cdot \textbf{f}(i_{\ell})^{\top}\textbf{M}^{-1}\textbf{f}(j_{\ell})\cdot \textbf{f}(i_{m})^{\top}\textbf{M}^{-1}\textbf{f}(j_{m})\cdot \textbf{f}(i_{r})^{\top}\textbf{M}^{-1}\textbf{f}(j_{r})\\
&\qquad- \textbf{f}(j_{k})^{\top}\textbf{M}^{-1}\textbf{f}(i_{k}) \cdot \textbf{f}(j_{\ell})^{\top}\textbf{M}^{-1}\textbf{f}(i_{\ell})\cdot \textbf{f}(j_{m})^{\top}\textbf{M}^{-1}\textbf{f}(i_{m})\cdot \textbf{f}(j_{r})^{\top}\textbf{M}^{-1}\textbf{f}(i_{r})\\
&=\begin{dcases}\frac{1}{8v^3}(v-1)^4(v^3-4v^2+6v-4) & \text{for \quad$i_{k}\neq j_{k},i_{\ell}\neq j_{\ell},i_{m}\neq j_{m},i_{r}\neq j_{r}$} \\
\frac{1}{8v^3}(v-1)^5(v^2-3v+3)   & \text{for \quad$i_{k}\neq j_{k},i_{\ell}\neq j_{\ell},i_{m}\neq j_{m},i_{r}= j_{r}$} \\
\frac{1}{8v^3}(v-1)^6(v-2)  & \text{for \quad$i_{k}\neq j_{k},i_{\ell}\neq j_{\ell},i_{m}= j_{m},i_{r}= j_{r}$} \\
\frac{1}{8v^3}(v-1)^7 & \text{for \quad $i_{k}\neq j_{k},i_{\ell}= j_{\ell},i_{m}= j_{m},i_{r}= j_{r}$}.
\end{dcases}
\end{split}
\end{equation*} \par
Now for a pair of alternatives $(\textbf{i},\textbf{j})\in\mathcal{X}^{(S)}_d$ of comparison depth $d$: there are $d(d-1)(d-2)(d-3)$ third-order interaction terms for which $(i_{k}i_{\ell}i_mi_r)$ and $(j_{k}j_{\ell}j_mj_r)$ differ in all four attributes $k$, $\ell$, $m$ and $r$, there are $(1/6)(S-d)d(d-1)(d-2)$ third-order interaction terms for which $(i_{k}i_{\ell}i_mi_r)$ and $(j_{k}j_{\ell}j_mj_r)$ differ in exactly three of the associated four attributes, there are $(1/4)(S-d)(S-d-1)d(d-1)$ third-order interaction terms for which $(i_{k}i_{\ell}i_mi_r)$ and $(j_{k}j_{\ell}j_mj_r)$ differ in exactly two of the associated four attributes and finally there are $(1/6)(S-d)(S-d-1)(S-d-2)d$ third-order interaction terms for which $(i_{k}i_{\ell}i_mi_r)$ and $(j_{k}j_{\ell}j_mj_r)$ differ in exactly one of the associated four attributes. Hence, we obtain

\begin{equation*}
\begin{split}
V(d,\bar\xi)&=(\textbf{f}(\textbf{i})-\textbf{f}(\textbf{j}))^{\top}\textbf{M}(\bar\xi)^{-1}(\textbf{f}(\textbf{i})-\textbf{f}(\textbf{j}))\\
&=\frac{d(v-1)}{h_{1}(\bar\xi)}+\frac{d(v-1)^2}{4vh_{2}(\bar\xi)}\begin{pmatrix}(d-1)(v-2)+2(S-d)(v-1)\end{pmatrix}      \\
&\qquad+\frac{d(v-1)^3}{24v^2h_{3}(\bar\xi)}\big((d-1)(d-2)(v^2-3v+3)\\
&\qquad\qquad\qquad\qquad+3(S-d)(d-1)(v-1)(v-2)  \\
&\qquad\qquad\qquad\qquad+3(S-d)(S-d-1)(v-1)^2\big) \\
&\qquad+d(d-1)(d-2)(d-3)\frac{(v-1)^4(v^3-4v^2+6v-4)}{8v^3h_{4}(\bar\xi)}  \\
&\qquad+\frac{4(S-d)d(d-1)(d-2)}{24}\frac{(v-1)^5(v^2-3v+3)}{8v^3h_{4}(\bar\xi)}  \\
&\qquad+\frac{6(S-d)(S-d-1)d(d-1)}{24}\frac{(v-1)^6(v-2)}{8v^3h_{4}(\bar\xi)}  \\
&\qquad+\frac{4(S-d)(S-d-1)(S-d-2)d}{24}\frac{(v-1)^7}{8v^3h_{4}(\bar\xi)} \\
&=\frac{d(v-1)}{h_{1}(\bar\xi)}+\frac{d(v-1)^2}{4vh_{2}(\bar\xi)}\begin{pmatrix}(d-1)(v-2)+2(S-d)(v-1)\end{pmatrix}      \\
&\qquad+\frac{d(v-1)^3}{24v^2h_{3}(\bar\xi)}\big((d-1)(d-2)(v^2-3v+3)\\
&\qquad\qquad\qquad\qquad+3(S-d)(d-1)(v-1)(v-2)  \\
&\qquad\qquad\qquad\qquad+3(S-d)(S-d-1)(v-1)^2\big)\\
&\qquad+\frac{d(v-1)^4}{192v^3h_{4}(\bar\xi)}\Big((d-1)(d-2)(d-3)(v^3-4v^2+6v-4) \\
&\qquad\qquad\qquad\qquad+4(S-d)(d-1)(d-2)(v^2-3v+3)(v-1)  \\
&\qquad\qquad\qquad\qquad+6(S-d)(S-d-1)(d-1)(v-1)^2(v-2)  \\
&\qquad\qquad\qquad\qquad+4(S-d)(S-d-1)(S-d-2)(v-1)^3\Big),
\end{split}
\end{equation*} 
for $(\textbf{i},\textbf{j})\in\mathcal{X}^{(S)}_d$ which proofs the proposed formula. 
\end{proof}
%%%%%%%%%%%%%%%%%%%%%%%%%%%%%%%%%%%%%%%%%%%%

\begin{proof}[Proof of Corollary~\ref{cor_thrm4}]
In view of Theorem~\ref{thrm4} it is sufficient to note that the representation of the variance function follows immediately by inserting the values of $h_{r}(\bar{\xi}_d)$ from Lemma \ref{lemma1} and $p_q={K \choose r}(v-1)^{q}$, $q=1,2,3,4$.
\end{proof}

%\begin{proof}
%In view of Theorem~\ref{thrm19} it is sufficient to note that the representation of the variance function follows immediately by inserting the values of $h_{q}(\xi_d)$ from Lemma \ref{lem9} and $p_q={K \choose q}(v-1)^{q}$, $q=1,2,3$. 
%\end{proof}
% % %55 
\begin{proof}[Proof of Theorem~\ref{theorem5}] 
First we note that the variance function $V(d,\xi^{\ast})$ is a polynomial of degree $4$ in the comparison depth $d$ with negative leading coefficient. Now, by the Kiefer-Wolfowitz equivalence theorem $V(d,\xi^{\ast})\leq p$ for all $d=0,1,\dots,S$. Hence, by the shape of the variance function it follows from \citet[][Theorem 3]{eric2019optimal4444} that $V(d,\xi^{\ast})= p$ may occur only at, at most two adjacent comparison depths $d^*$ and $d^*+1$ or $d_1^*$ and $d_1^*+1$, say, in the interior. 
\end{proof}

  \begin{sidewaystable} 
\begin{table}[H]
\centering
\caption{Values of the variance function $V(d,\xi^\ast)$ for $\xi^{\ast}$ from Table~\ref{tab4.7} for $S=K$ (boldface \textbf{1} corresponds to the optimal comparison depths $d^*$ and $d^*_1$).}\label{tab4} 
\begin{tabular*}{\linewidth}{@{\extracolsep{\fill}}
    *{12}{D{.}{.}{4}}
                }
    \toprule
  & \multicolumn{9}{c}{$d$} \\
    \cmidrule(lr){3-12}
   S & v       &    1     &  2 & 3  &  4  & 5    &    6&  7    &8&9&10                       \\
    \hline
5&2&0.938& \textbf{1}&0.938& \textbf{1}&0.938&&&&&\\ 
&3&0.881&\textbf{1} &0.961&1&0.987&&&&&\\   
&4&0.858&\textbf{1}&0.965&0.985&0.981&&&&&\\   
&5&0.845&\textbf{1}&0.970&0.982&0.980&&&&&\\ 
&6&0.837&\textbf{1}&0.974&0.982&0.981&&&&&\\ 
&7&0.832&\textbf{1}&0.977&0.983&0.982&&&&&\\ 
&8&0.828&\textbf{1}&0.980&0.984&0.983&&&&&\\  \hline
6&2&0.850&\textbf{1}&0.950& 0.950&\textbf{1}&0.850&&&&\\
&3&0.793&\textbf{1}&0.988&0.970&\textbf{1}&0.977&&&&\\ 
&4&0.777&0.999&\textbf{1}&0.977&0.995&0.987&&&&\\
&5&0.570&0.984 &\textbf{1}& 0.979&0.990&0.986&&&&\\  
&6&0.734&0.975&\textbf{1}&0.982&0.989&0.987&&&&\\
&7&0.723&0.969 &\textbf{1}&0.984&0.989&0.988&&&&\\ 
&8&0.715&0.964&\textbf{1}&0.985& 0.989&0.988&&&&\\  \hline
7&2&0.792&\textbf{1}&0.982&0.952&0.982&\textbf{1}&0.792&&&\\
&3&0.723&0.973 &\textbf{1}& 0.972&0.971&0.997&0.965&&&\\   
&4&0.679&0.945 &\textbf{1}&0.984&0.976&0.990&0.980&&&\\   
&5&0.657&0.930&\textbf{1}&0.993&0.983&0.992&0.987&&&\\   
&6&0.643&0.921 &\textbf{1}&0.999&0.989&0.995&0.993&&&\\   
&7&0.634&0.914 &0.998&\textbf{1}&0.991&0.995&0.994&&&\\   
&8&0.625&0.906 &0.994&\textbf{1}&0.991&0.995&0.994&&&\\ \hline
8&2&0.759&0.998&\textbf{1}&0.954&0.954&\textbf{1}&0.998&0.759&&\\ 
&3&0.650&0.928&\textbf{1}&0.990& 0.973&0.981&0.998&0.964&&\\  
&4&0.612&0.898 &0.993&\textbf{1}&0.986&0.984&0.995&0.984&&\\  
&5&0.585&0.873 &0.982&\textbf{1}&0.990&0.986&0.993&0.988&&\\ 
&6&0.567&0.858 & 0.974&\textbf{1}&0.994&0.989&0.994&0.991&&\\   
&7&0.559&0.848&0.969&\textbf{1}&0.997&0.992&0.995&0.994&&\\  
&8&0.552& 0.841 &0.965&\textbf{1}&0.999&0.994&0.996&0.996&&\\ \hline
\end{tabular*}
\raggedleft\footnotesize (To be continued)
    \end{table}
 \end{sidewaystable} 

 \begin{sidewaystable}     
\begin{table}[H]
\centering
\setlength\tabcolsep{0pt}
\caption*{Table \thetable{} (continued)}
\begin{tabular*}{\linewidth}{@{\extracolsep{\fill}}
    *{12}{D{.}{.}{4}}
                }
    \hline
  & \multicolumn{9}{c}{$d$} \\
    \cmidrule(lr){3-12}
   S & v       &    1     &  2 & 3  &  4  & 5    &    6&  7    &8&9& 10                     \\
    \hline
9&2&0.693&0.958&\textbf{1}&0.966&0.945&0.966&\textbf{1}&0.958&0.693&\\
&3&0.596&0.885 &0.989&\textbf{1}& 0.983&0.976&0.987&0.997&0.960&\\   
&4&0.550&0.841& 0.967&\textbf{1}&0.995& 0.986&0.987& 0.995&0.984&\\   
&5&0.528&0.819 &0.954&0.998&\textbf{1}&0.992&0.991& 0.996& 0.992&\\   
&6&0.512&0.801 &0.941&0.993&\textbf{1}&0.994& 0.992&0.996&0.993&\\   
&7&0.501&0.789 &0.932&0.989&\textbf{1}&0.996& 0.993&0.996&0.995&\\ 
&8&0.493& 0.780 &0.927&0.986&\textbf{1}&0.997&0.994&0.996&0.996&\\ \hline
10&2&0.644&0.925& \textbf{1}&0.985&0.958&0.958&0.985&\textbf{1}&0.925&0.644\\
&3&0.544&0.836 &0.965&\textbf{1}&0.994&0.982&0.981&0.991&0.996& 0.960\\   
&4&0.501& 0.791 &0.936&0.991&\textbf{1}&0.993&0.987&0.990&0.996& 0.985\\   
&5&0.478& 0.764 &0.916&0.982&\textbf{1}&0.997& 0.992& 0.993&0.996& 0.992\\   
&6&0.464&0.748 &0.904&0.976&0.999&\textbf{1}&0.996&0.995&0.998&0.995\\  
&7&0.453&0.735 &0.893&0.969& 0.997&\textbf{1}& 0.996&0.995&0.997&0.996\\   
&8&0.446&0.726&0.885&0.965&0.995&\textbf{1}&0.997&0.996&0.996&0.997\\  \bottomrule
\end{tabular*}
    \end{table}   
     \end{sidewaystable}  
\end{appendices}
\end{document}